\newcommand{\norm}[1]{\left\lVert#1\right\rVert}
\newcommand{\abs}[1]{\lvert#1\rvert}
\newtheorem{thm}{Theorem}[section]
\newtheorem{prop}[thm]{Proposition}
\def\C{{\mathbb{C}}}	
\def\R{{\mathbb{R}}}
\def\I{{\mathbb{I}}}
\def\mm{{\mathcal M}}	
\def\M{{\mathcal M}}		
\def\aa{{\mathcal A}}
\def\A{{\mathcal A}}
\def\hh{{\mathcal H}}
\def\pp{{\mathcal P}}
\def\X{{\mathcal X}}
\def\ss{{\mathcal S}}
\def\cinf{C^{\infty}\lp\mm\rp}
\def\lp{\left(}
\def\rp{\right)}
\def\xo0{\omega^0_x}
\def\yo0{\omega^0_y}
\def\xo0{x_\omega^0}
\def\yo0{y_\omega^0}
\def\pa{{\mathcal P}(\aa)}
\def\sa{{\mathcal S}(\aa)}
\def\X{{\mathcal X}}
\begin{document}
\title{Connes distance  and optimal transport}

\author{Pierre Martinetti}

\address{Dipartimento di matematica, universit\`a di Genova, via
  Dodecaneso 35, 16146 Genova}

\ead{martinetti@dima.unige.it}

\begin{abstract}
We give a brief overview on the relation between Connes spectral distance in
noncommutative geometry and the Wasserstein distance of order $1$ in
optimal transport. We first recall how these two distances
coincide on the space of probability measures on a Riemannian
manifold. Then we
work out a simple example on a discrete space,
showing that the spectral distance between arbitrary states does not coincide with the
Wasserstein distance with cost the spectral distance between pure states. \end{abstract}

\section{The metric aspect of noncommutative geometry}

Topology is the
minimal structure required for a set to be called \emph{space}, and
for its elements to become \emph{points}. Topology gives sense to the
notion of neighbourhood (and more generally to the notion
of open sets). The algebraic dual notion is that of
continuity (a continuous function being, by definition, such that the
inverse image of any open set is an open set). Actually this duality
goes quite far, since all the information of a topological space is contained within the
algebra of continuous functions defined on it. More precisely,
Gelfand's duality states that any complex \emph{commutative} $C^*$-algebra $\A$ is isomorphic to the algebra of
continuous functions vanishing at infinity on some locally compact
topological space - the space $\pa$  of pure states
of $\A$ - and conversely any locally compact topological space $\X$  is
homeomorphic to the space of pure states of the \emph{commutative} algebra $C_0(\X)$ of
continuous functions on $\X$ vanishing at infinity: 
\begin{equation}
 \A \simeq C_0(\pa), \quad \X \simeq \pp(C_0(\X)).\label{eq:6}
 \end{equation}

Recall that the pure states of an involutive algebra $\A$ are the extremal points of the set
of states, the latter being the  linear maps $\varphi$ on $\A$
wich are positive
- $\varphi(a^*a)\in\R^+$ -
and of norm $1$
(where $||\varphi||=\sup_{a\in\A}\frac{|\varphi(a)|}{||a||}$). In particular, a state of
the commutative algebra
$C_0(\X)$ is the integration with respect to a probability
measure $\mu$, with pure states given by Dirac $\delta$ measures (i.e.
evaluation at a point): 
$$\pp(C_0(\X))\ni\delta_x: x\to \int_\X f\delta_x = f(x) \,,\qquad S(C_0(\X))\ni\varphi:
f\to\int_\X f \,\text{d}\mu \quad \forall f\in C_0(\X).$$
 
 Connes' noncommutative geometry extends Gelfand's duality
 beyond topology, so that to encompass all the aspects of Riemannian
 geometry, in particular the metric. To do so, one needs more than an
 algebra: a \emph{spectral triple} \cite{Connes:1996fu} 
 consists in an involutive algebra $\A$ acting 
faithfully on an Hilbert space $\hh$, with $D$ a selfadjoint operator on $\hh$ such
that the commutator $[D, a]$ is bounded and  $a[D - \lambda\I]^{-1}$ is compact
for any $a\in \A$ and $\lambda\notin \text{ Sp } D$. 
  When 
a set of conditions (dimension, regularity, finitude, first order,
orientability) is satisfied, then one is able to characterize a
Riemannian manifold by purely spectral data\cite{connesreconstruct}:  
\begin{itemize}

\item {For $\M$ a compact Riemannian manifold, then
   \begin{equation}
(C^\infty\!(\M), \Omega^\bullet(\M), d+d^\dag)
\label{eq:9}
\end{equation}
is a spectral triple, where $\Omega^\bullet(\M)$ is the Hilbert space
of square integrable differential forms on $\M$ and $d+d^\dag$ is the Hodge-Dirac operator ($d$ the exterior derivative, $d^\dag$ its Hodge-adjoint).

\item  When $(\aa, \hh, D)$ is a spectral triple with $\A$ unital commutative, then there exists a compact Riemannian manifold $\M$ such that $\A=C^\infty(\M)$.}
\end{itemize}

By adding two extra-conditions (real structure and Poincar\'e
duality), the result is extended to spin manifolds.

Why is such a spectral characterization of manifolds interesting ?
Because the properties defining a spectral triple still make sense for a
noncommutative $\A$
\cite{Connes:1996fu}. A \emph{noncommutative
  geometry} is thus a spectral triple where the algebra $\A$ is
noncommutative. At the light of Gelfand duality, this is the geometrical object whose
algebra of functions defined on it is non commutative. As such it
cannot be a usual topological space (otherwise its algebra of
continuous functions would be commutative), but it rather appears as a
``space without points''.
\medskip

 \begin{eqnarray*}
\text{commutative spectral triple} &\rightarrow& \text{noncommutative
  spectral triple}\\
\updownarrow & & \downarrow \\
\text{Riemannian geometry} & & \text{non-commutative geometry}
\end{eqnarray*}
\medskip

However, always at the light of Gelfand duality, it is tempting to
consider the pure states of the
noncommutative algebra as the equivalent of points in the
noncommutative context. This is all the more appealing that the same formula that allows to retrieve the
Riemannian geodesic distance in Connes reconstruction theorem, 
also provides the space of states with a distance.

Explicitly, given a spectral triple $(\A,\hh, D)$, with $\A$ commutative or not, one defines on its state
space $\sa$ the \emph{spectral
  distance} \cite{Connes:1992bc}
\begin{equation}
d_D (\varphi, \psi) = \underset{a\in \A}{\sup}\{
\abs{\varphi(a) - \psi(a)}\, \slash \norm{[D,a]}\leq
1\} \qquad \forall \varphi, \psi\in \sa.\label{eq:8}
\end{equation}
 It is not difficult to check that $d_D$ has all the properties of a
 distance (zero if and only if $\varphi=\psi$, $d_D (\varphi, \psi)  =
 d_D (\psi, \varphi)$, triangle inequality), except that it may be
 infinite.  By a slight abuse of
 language, we still call it \emph{distance}. For an overview of
 explicit computation of this distance in various examples of
 commutative and noncommutative spectral triples, see \cite{Martinetti:2016aa}.

\section{Rieffel's remark and Wasserstein distance of order $1$}

Rieffel noticed in \cite{Rieffel:1999wq} that formula 
\eqref{eq:8}, applied to the spectral
triple of a Riemannian manifold \eqref{eq:9}, was nothing but the Wasserstein distance of order $1$ in
the theory of optimal transport, or more exactly a reformulation of
Kantorovich dual of the Wasserstein distance.

To see it, let us first remind what the Wasserstein (or Monge
Kantorovich distance) is. Let $\X$ be a locally compact Polish space, $c(x,y)$  a positive real function, the ``cost''. 
The minimal work $W$ required to transport the probability measure
$\mu_1$ to $\mu_2$ is
\begin{equation}
W(\mu_1,\mu_2)\doteq 
\inf _{\pi} \int_{\mathcal{X}\times\mathcal{X}} c(x, y)\;d\pi
\label{eqq:7}
\end{equation}
where the infimum is over all \emph{transportation plans},
i.e. measures $\pi$ on ${\mathcal{X}}\times {\mathcal{X}}$ with
marginals $\mu_1, \mu_2$. When the cost function $c$ is a distance $d$, then
\begin{equation}
W(\mu_1,\mu_2)\doteq 
\inf _{\pi} \int_{\mathcal{X}\times\mathcal{X}} d(x, y)\; d\pi
\label{eq:7}
\end{equation}
 is a distance (possibly infinite) on the space of probability measures on $\X$, called the
 Monge-Kantorovich (or Wasserstein) distance of
   order $1$.


In \cite{Kan42}, Kantorovich showed that Monge problem of minimizing
the cost (eq. \eqref{eqq:7}) had an equivalent dual formulation
(interpreted as maximizing a profit). Namely, $W(\mu_1, \mu_2)$ is
equal to 
\begin{equation}
\label{kanto1}
W(\varphi_1, \varphi_2) = \sup_{\norm{f}_{\mathrm{Lip}}\leq
  1}\left(\int_{\X}  f  d\mu_1 - \int_{\X} f  d\mu_2\right)
\end{equation}
where $\varphi_1, \varphi_2$ are the states of $C(\X)$ defined by the
measure $\mu_1, \mu_2$:
\begin{equation}
  \label{eq:10}
  \varphi_i(f)=\int_\X f d\mu_i \quad\forall f\in C(\X),\;  i=1,2;
\end{equation}
and the supremum in \eqref{kanto1} is on all the functions
$1$-Lipschitz with respect to the cost, that is
\begin{equation}
  \label{eq:11}
  f(x,y)\leq c(x,y) \quad \forall x,y \in\X.
\end{equation}
Let $\X=\M$ be a complete, connected, without boundary,  Riemannian
manifold. For any $\varphi, \tilde\varphi \in \ss(C_0(\M))$, 
$$W(\varphi, \tilde\varphi) = d_D(\varphi,\tilde\varphi) $$
where $W$ is the Wasserstein distance associated to the cost
$d_{\text{geo}}$, while $d_D$ is the spectral distance associated to 
$\left(C^\infty_0(\mm), \Omega^\bullet(\M), D= d+d^{\dag}\right)$.
That Kantorovich dual \eqref{kanto1} coincides with the spectral
distance \eqref{eq:8} then follows from the observation that the
supremum in the latter can be searched equivalently on selfadjoint
elements, for which one has  
\begin{equation}
\norm{[d+d^\dagger, f]^2} = \norm{f}^2_{\mathrm{Lip}}.
\label{eq:12}
\end{equation}
As pointed out in \cite{dAndrea:2009xr}, one has to be careful that
the manifold is complete, otherwise there is no guarantee that the
supremum on the $1$-Lipschitz functions in \eqref{kanto1} coincides with the supremum on $1$-Lipschitz functions
vanishing at infinity  in \eqref{eq:8}.

\section{Towards a theory of optimal transport in noncommutative
  geometry ?}

Connes spectral distance on a manifold coincides with Kantorovich dual
formulation of the Wasserstein distance of order $1$. It is quite
natural to wonder if the same is true is the noncommutative
setting. But there does not exist any 
``noncommutative Wasserstein distance'' of whom the spectral distance
would be the dual. Is it possible to build one ? More specifically,
given a spectral triple $(\A, \hh, D)$ with noncommutative $\A$, is
there a Wasserstein distance $W_D$ on $\sa$ such that its Kantorovich
dual is the spectral distance $d_D$ ? 
\begin{eqnarray*}
\text{{\bf Commutative case:}} & & \text{{\bf Noncommutative case:}}\\[5pt]
\text{Connes distance $d_D$} &\rightarrow&
\text{Connes distance $d_D$}\\
\uparrow & & \lvert \\
 \text{Kantorovich duality} & & \text{Kantorovich duality  ?} \\
\downarrow & & \downarrow\\
\text{Wassertein distance } W & &W_D \text{ for some noncommutative cost ?}\\
\text{ with } d_D(\delta_x, \delta_y) \text{ as a cost function}& &  
\end{eqnarray*}

In the commutative case $\A=\cinf$, one retrieves the cost function as the
Wasserstein distance between pure states:
\begin{equation*}
  \label{eq:4}
  W(\delta_x, \delta_y) = c(x,y).
\end{equation*}
So it is tempting to define $W_D$ on the whole space of states $\sa$
as the Wasserstein distance associated with the cost $c$ defined by
the spectral distance on the space of pure states $\pa$, that is 
\begin{equation}
c(\omega_1,\omega_2) :=d_D(\omega_1, \omega_2)\label{eq:14}
\end{equation}

We worked out this construction in \cite{Martinetti:2012fk},
restricting to unital separable $C^*$-algebras, for which it is known
that a state is a probability measure on $\pa$ \cite[p.144]{Bratteli:1987fk}
{\footnote{This may be true in general, but for safety we restrict to
    this well known case.}}, namely to any 
$\varphi\in\sa$, there exists a (non-necessarily unique)  probability
measure $\mu\in\text{Prob}(\pa)$ such that 
\begin{equation}
{\varphi(a) = \int_{\pa} \hat{a}(\omega)\, d\mu(\omega)}\label{eq:15}
\end{equation}
  where $\hat{a}(\omega) \doteq \omega(a)$ denotes the evaluation at
  $\omega\in\pa$ 
   of an element $a$ of $\A$, viewed as a function on $\pa$.
The Wasserstein distance on $\sa$ associated
with the cost \eqref{eq:14}
has Kantorovich-dual formulation
\begin{equation}
  \label{eq:2}
  \ {W_D(\varphi, \tilde\varphi) \doteq  \sup_{a\in \text{Lip}_D(\A)} \left\{\abs{\int_{\pa} \hat{a}(\omega) \,d\mu(\omega)
    - \int_{\pa} \hat a(\omega) \,d\tilde\mu(\omega)}\right\}},
\end{equation}
where
\begin{equation}
  \label{eq:3bis}
 {\text{Lip}_D(\A)\doteq \{a\in \A \text{ such that }
\abs{\omega_1(a) - \omega_2(a)} \leq d_D(\omega_1, \omega_2) \;
  \forall \omega_1, \omega_2\in\pa\}} 
\end{equation}
is the set of element of $\A$ that are $1$-Lipschitz with respect to
the cost \eqref{eq:14}.

It is not difficult to show that the Wasserstein distance provides
a upper bound to the spectral distance \cite[Prop. III.1]{Martinetti:2012fk},
\begin{equation}
d_D(\varphi, \tilde\varphi) \leq W_D(\varphi,\tilde\varphi) \quad
\forall \varphi, \tilde\varphi\in\sa.
\label{eq:16}
\end{equation}
The equality holds on any subset of $\sa$ given by a convex
linear combination of two pure states: fixed $\omega_1,
\omega_2\in\pa$, one denotes $\varphi_\lambda:= \lambda\omega_1 +
(1-\lambda)\omega_2$. Then 
\begin{equation}
  \label{eq:18}
  d_D(\varphi_{\lambda_1}, \varphi_{\lambda_2})=
  W_D(\varphi_{\lambda_1}, \varphi_{\lambda_2}) \quad \forall
  \lambda_1, \lambda_2\in\R.
\end{equation}
This shows in particular that $W_D=d_D$ on the whole of $\sa$ if $\A=M_2(\C)$, since the pure state
space of the algebra of $2\times 2$ matrices is homeomorphic to the
$2$-sphere, so that the space of states is the $2$-ball, and any two
states $\varphi_1, \varphi_2$ can always be decomposed as two convex
linear combinations $\varphi_{\lambda_1}, \varphi_{\lambda_2}$ of the same two pure states. 
\medskip

However the two distances are not equal in
general, as can be seen in the following counter-example, taken from
\cite[\S 7]{Rieffel:1999ec}. Consider $\A=\C^3$ acting on $\hh=\C^3$ as
a diagonal matrix,
\begin{equation}
  \label{eq:19}
  \pi(z_1,z_2,z_3) :=\left(
    \begin{array}{ccc}
      z_1&0 &0\\0&z_2&0\\0&0&z_3\end{array}\right)
\quad \forall\, (z_1, z_2, z_3)\in\C^3,
\end{equation}
and take as Dirac operator 
\begin{equation}
  \label{eq:20}
  D=\left(
    \begin{array}{ccc}
      0&0&\alpha\\ 0&0&\beta\\ \alpha&\beta&0
    \end{array}\right)\quad \alpha, \beta\in \R^+.
\end{equation}
There are three pure states $\delta_i$ for $\A$, defined as
\begin{equation}
  \label{eq:22}
  \delta_i(z_1, z_2, z_3) = z_i \quad i=1,2,3.
\end{equation}
So the space of states is the plain triangle with summit $\delta_1, \delta_2,
\delta_3$.

 By \eqref{eq:18}, one has that $W_D$ coincides with $d_D$
on each edge of the triangle. But the two distances do not agree on
the whole triangle.

\begin{prop}
\label{prop:WD}
  Let $\varphi$, $\varphi'$ be states  in
${\cal S}(\C^3)$,
\begin{equation}
  \label{eq:23}
  \varphi= \lambda_1 \delta_1 + \lambda_2\delta_2 +
  (1-\lambda_1-\lambda_2)\delta_3,\qquad
\varphi'= \lambda'_1 \delta_1 + \lambda'_2\delta_2 +
  (1-\lambda'_1-\lambda'_2)\delta_3
\end{equation}
where $\lambda_i, \lambda'_i\in\R$, $i=1,2$ are such that $\Lambda_1:=\lambda_1 -\lambda'_1$ and
$\Lambda_2:=\lambda_2 -\lambda'_2$ have the same sign. Then 
\begin{equation}
  \label{eq:25}
  W_D(\varphi, \varphi')= \frac{|\Lambda_1|}{\alpha} +  \frac{|\Lambda_2 |}{\beta} 
\end{equation}
while
\begin{equation}
  \label{eq:26}
  d_D(\varphi, \varphi')=  \sqrt{\frac{\Lambda_1^2}{\alpha^2} + \frac{\Lambda^2}{\beta^2}}
\end{equation}
\end{prop}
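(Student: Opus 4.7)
The plan is to reduce both spectral and Wasserstein distances to explicit two-variable optimisations. For any $a=\pi(z_1,z_2,z_3)$ I would first compute the commutator: the only non-zero entries of $[D,a]$ involve $\alpha(z_1-z_3)$ and $\beta(z_2-z_3)$. Forming $[D,a]^*[D,a]$ gives a block-diagonal matrix whose non-zero spectrum consists of the single value
$$\alpha^2|z_1-z_3|^2+\beta^2|z_2-z_3|^2$$
(with multiplicity two, since the upper $2\times 2$ block has rank one and its trace coincides with the $(3,3)$ entry). Hence $\norm{[D,a]}^2$ equals this quantity. Writing $A:=z_1-z_3$ and $B:=z_2-z_3$, one moreover has $\varphi(a)-\varphi'(a)=\Lambda_1 A+\Lambda_2 B$, which is independent of $z_3$.

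For the spectral distance \eqref{eq:8}, maximising $|\Lambda_1 A+\Lambda_2 B|$ under the constraint $\alpha^2|A|^2+\beta^2|B|^2\le 1$ becomes a Cauchy--Schwarz problem on the complex unit ball of $\C^2$ after the substitution $u=\alpha A$, $v=\beta B$. This immediately yields the announced value $\sqrt{\Lambda_1^2/\alpha^2+\Lambda_2^2/\beta^2}$, attained at a real collinear choice. No sign hypothesis on $\Lambda_1,\Lambda_2$ enters at this stage.

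For the Wasserstein distance I would use the formula just obtained to read off the pure-state distances $d_D(\delta_1,\delta_3)=1/\alpha$, $d_D(\delta_2,\delta_3)=1/\beta$ and $d_D(\delta_1,\delta_2)=\sqrt{1/\alpha^2+1/\beta^2}$, so that $\lda$ is cut out by the three conditions $|z_i-z_j|\le d_D(\delta_i,\delta_j)$. Using only the two ``radial'' constraints $|A|\le 1/\alpha$, $|B|\le 1/\beta$ already gives the upper bound $|\Lambda_1|/\alpha+|\Lambda_2|/\beta$ for $W_D(\varphi,\varphi')$. When $\Lambda_1$ and $\Lambda_2$ share sign, the choices $A=\mathrm{sgn}(\Lambda_1)/\alpha$ and $B=\mathrm{sgn}(\Lambda_2)/\beta$ saturate this bound while producing $|z_1-z_2|=|1/\alpha-1/\beta|$, which is automatically consistent with the diagonal constraint $|z_1-z_2|\le\sqrt{1/\alpha^2+1/\beta^2}$.

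The only substantive step is this last compatibility check, which reduces to the one-line estimate $(1/\alpha-1/\beta)^2\le 1/\alpha^2+1/\beta^2$; everything else is routine linear algebra and a Cauchy--Schwarz argument. The sign hypothesis is essential here: with opposite signs, the same extremal construction would force $|z_1-z_2|=1/\alpha+1/\beta$, strictly larger than $\sqrt{1/\alpha^2+1/\beta^2}$, so the diagonal Lipschitz constraint would become binding and $W_D$ would fall below the $\ell^1$-type expression. This is precisely the scenario excluded by the hypothesis of the proposition.
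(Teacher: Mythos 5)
Your proposal is correct and follows essentially the same route as the paper: the same commutator-norm formula $\norm{[D,a]}^2=\alpha^2|z_1-z_3|^2+\beta^2|z_2-z_3|^2$, the same reduction $\varphi(a)-\varphi'(a)=\Lambda_1(z_1-z_3)+\Lambda_2(z_2-z_3)$, the same extremal element $(1/\alpha,1/\beta,0)$ saturating the bound for $W_D$, and the same constrained maximisation for $d_D$ (which you settle by Cauchy--Schwarz where the paper uses a one-variable calculus argument). Your explicit verification that the extremal element also satisfies the remaining Lipschitz constraint $|z_1-z_2|\leq\sqrt{1/\alpha^2+1/\beta^2}$, and your observation that the sign hypothesis plays no role in the $d_D$ computation, are small points the paper leaves implicit.
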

\begin{proof}
   The cost function \eqref{eq:14} is obtained computing explicitly the
   spectral distance between pure states  (see e.g.
 \cite[Prop. 7]{Iochum:2001fv}):
  \begin{equation}
    \label{eq:17}
    d_D(\delta_1, \delta_2)= \sqrt{\frac 1{\alpha^2}+ \frac
      1{\beta^2}},\qquad
    d_D(\delta_1, \delta_3)= \frac 1{\alpha},\qquad
    d_D(\delta_2, \delta_3)= \frac 1{\beta}.
  \end{equation}
The Lipschitz ball \eqref{eq:3bis} is thus
\begin{equation}
  \label{eq:27}
  \text{Lip}_D(\C^3)=\left\{(z_1, z_2, z_3)\in\C^3\,;\, |z_1-z_2|\leq  \sqrt{\frac 1{\alpha^2}+ \frac
      1{\beta^2}}, \,|z_1-z_3|\leq \frac 1{\beta},\, |z_2-z_3|\leq \frac 1{\alpha}\right\}.
\end{equation}
  For any $a=(z_1, z_2, z_3)\in \C^3$, one has \begin{equation}
    \label{eq:24}
    \varphi(a) -\varphi'(a) = \Lambda_1 (z_1 - z_3) +
    \Lambda_2 (z_2-z_3).
  \end{equation}
Therefore
  \begin{equation}
    \label{eq:28}
    W_D(\varphi, \varphi')\leq \frac{|\Lambda_1|}{\alpha} + \frac{|\Lambda_2|}{\beta}.
  \end{equation}
Since $\Lambda_1 $ e $\Lambda_2 $ have the
same sign, this upper bound is attained by $a_0$ in $\text{Lip}_D(\C^3)$ defined by
\begin{equation}
  \label{eq:29}
  z_1=\frac 1{\alpha},\quad z_2=\frac 1{\beta},\quad z_3=0.
\end{equation}

To prove \eqref{eq:26}, one computes the commutator of $a=(z_1, z_2,
z_3)$ with $D$ (see e.g. \cite{Iochum:2001fv})
\begin{equation}
  \label{eq:31}
  ||[D,a]||= \sqrt{\alpha^2|z_3-z_1|^2 + \beta^2|z_3-z_2|^2}.
\end{equation}
By subtracting $z_1\I$, one can always assume that $z_3=0$. The
commutator condition $||[D,a]||\leq 1$ becomes $\alpha^2|z_1|^2 +
\beta^2|z_2|^2\leq 1$, which is equivalent to
\begin{equation}
  \label{eq:32}
  |z_2|\leq \sqrt{\frac{1-\alpha^2|z_1|^2}{\beta^2}}.
\end{equation}
  For such $a$, one obtains from \eqref{eq:24}
  \begin{align}
    \label{eq:33}
    |\varphi(a) - \varphi'(a)|&\leq \Lambda_1 |z_1| + \Lambda_2|z_2|\\
& \leq \Lambda_1 |z_1| + \Lambda_2\sqrt{\frac{1-\alpha^2|z_1|^2}{\beta^2}}.
  \end{align}
On $[1, \frac 1{\alpha}]$, the function $f(x)= \Lambda_1 x
+\Lambda_2\sqrt{\frac{1-\alpha^2x^2}{\beta^2}}$ reaches its maximum
when $f'$ vanishes, that is for
\begin{equation}
 x_0=\frac{\Lambda_1}{\alpha^2\sqrt{\frac{\Lambda_1^2}{\alpha^2} + \frac{\Lambda_2^2}{\beta^2}}}.
\label{eq:34}
\end{equation}
This maximum,
\begin{equation}
  \label{eq:35}
  f(x_0)= \sqrt{\frac{\Lambda_1^2}{\alpha^2} + \frac{\Lambda^2}{\beta^2}},
\end{equation}
 is an upper bound for the spectral distance, reached by the
element $a=(x_0, f(x_0), 0)$.
\end{proof}

 \section{Conclusion and outlook}

On a manifold, Connes spectral distance between arbitrary states coincides
with the Wasserstein distance of order $1$ with cost the geodesic
distance. On an arbitrary spectral triple, the spectral distance $d_D$
on the space of states $\sa$  (viewed as the convex hull of the pure
states space $\pa$) 
does not coincide with the Wasserstein distance $W_D$ with cost 
function $d_D$ (on $\pa$). However, $W_D$ always provides an upper
bound to the spectral distance, and the two distances coincides on any
convex combination of two fixed pure states. 

This shows that the interpretation of the spectral distance
as a Wasserstein distance is more involved that could be initially
thought, although the intriguing example worked out in this paper ($W_D$ is the sum of the opposite and adjacent
sides of a right a triangle, $d_D$ is the length of the hypothenuse) suggests
that there might exist a simple relation between the two distances. 

Let us also mention that there do exists a formulation of the spectral
distance as an infimum rather than a supremum (an analogue to ``dual
of the dual'' formula of the Wasserstein distance in optimal transport),
whose possible interpretation as a noncommutative cost still has to be elucidated\cite{dAndrea:2009xr}.



\section*{References}


\begin{thebibliography}{9} 

\bibitem{Bratteli:1987fk}
O.~Bratteli and D.~W. Robinson, \emph{Operator algebras and quantum statistical
  mechanics 1}, Springer, 1987.

\bibitem{Connes:1996fu}  A.~Connes, \emph{Gravity coupled with matter and the foundations of
  noncommutative geometry}, Commun. Math. Phys. \textbf{182} (1996), 155--176.

\bibitem{Connes:1992bc}
A.~Connes and J.~Lott, \emph{The metric aspect of noncommutative geometry},
  Nato ASI series B Physics \textbf{295} (1992), 53--93.

\bibitem{Connes:1994kx}
Alain Connes, \emph{Noncommutative geometry}, Academic Press, 1994.

\bibitem{connesreconstruct}
Alain Connes, \emph{On the spectral characterization of manifolds}, J. Noncom. Geom.
  \textbf{7} (2013), no.~1, 1--82.

\bibitem{dAndrea:2009xr}
Francesco D'Andrea and Pierre Martinetti, \emph{A view on optimal transport
  from noncommutative geometry}, SIGMA \textbf{6} (2010), no.~057, 24
pages.

\bibitem{dAndrea:2009xr}
Francesco D'Andrea and Pierre Martinetti, in preparation.

\bibitem{Iochum:2001fv}
Bruno Iochum, Thomas Krajewski, and Pierre Martinetti, \emph{Distances in
  finite spaces from noncommutative geometry}, J. Geom. Phy. \textbf{31}
  (2001), 100--125.

\bibitem{Kan42}
L.~V. Kantorovich, \emph{On the transfer of masses}, Dokl. Akad. Nauk. SSSR
  \textbf{37} (1942), 227--229.

\bibitem{Martinetti:2012fk}
P.~Martinetti, \emph{Towards a {M}onge-{K}antorovich distance in noncommutative
  geometry}, Zap. Nauch. Semin. POMI \textbf{411} (2013).

\bibitem{Martinetti:2016aa}
P.~Martinetti, \emph{From {M}onge to {H}iggs: a survey of distance computation
  in noncommutative geometry}, Contemporary Mathematics \textbf{676} (2016),
  1--46.

\cite{Martinetti:2012fk}\bibitem{Rieffel:1999wq}
Marc~A. Rieffel, \emph{Metrics on states from actions of compact groups},
  Documenta Math. \textbf{3} (1998), 215--229.

\bibitem{Rieffel:1999ec}
Marc~A. Rieffel, \emph{Metric on state spaces}, Documenta Math. \textbf{4}
  (1999), 559--600.




\end{thebibliography}
 \end{document}